\newtheorem{theorem}{Theorem}
\newtheorem{assumption}{Assumption}
\newtheorem{lemma}{Lemma}
\newtheorem{definition}{Definition}
\newtheorem{corollary}{Corollary}
\newtheorem{prop}{Proposition}
\newcommand{\diag} {\mbox{\rm diag}}
\newcommand{\rank} {\mbox{\rm rank}}
\newcommand{\Var} {\mbox{\rm Var}\,}
\newcommand{\norm}[1]{\left\| #1 \right\|}
\newcommand{\E}{\mathbb{E}}
\newcommand{\T}{\mathbf{T}}
\newcommand{\kron}{\otimes}
\newcommand{\N}{\ensuremath{\mathcal{N}}}
\newcommand{\matlab}{\textsc{matlab}}
\newcommand{\vect}{\mathrm{vec}}
\newcommand{\sign}{\mathrm{sign}}
\renewcommand{\matrix}[1]{\begin{bmatrix} #1 \end{bmatrix}}
\title{\LARGE \bf A kernel-based approach to\\ overparameterized Hammerstein system identification}
\author{Riccardo S. Risuleo, Giulio Bottegal and H\r akan Hjalmarsson
\thanks{R. S. Risuleo, G. Bottegal and H. Hjalmarsson are with the ACCESS Linnaeus Center, School of Electrical Engineering, KTH Royal Institute of Technology, Sweden ({\tt \small risuleo;bottegal;hjalmars@kth.se}).
This work was supported by the European Research Council under the
advanced grant LEARN, contract 267381 and by the Swedish Research
Council under contract 621--2009--4017.
}}
\begin{document}
\maketitle

\begin{abstract}                          
The object of this paper is the identification of Hammerstein systems, which
are dynamic systems consisting of a static nonlinearity and a linear
time-invariant dynamic system in cascade.
We assume that the nonlinear function can be
described as a linear combination of $p$ basis functions. We model the system
dynamics by means of an $np$-dimensional vector. This vector, usually referred
to as \emph{overparameterized vector}, contains all the combinations between
the nonlinearity coefficients and the first $n$ samples of the impulse response
of the linear block. The estimation of the overparameterized vector is
performed with a new regularized
kernel-based approach. To this end, we introduce a novel kernel
tailored for overparameterized models, which yields estimates that
can be uniquely decomposed as the combination of an impulse response and $p$
coefficients of the static nonlinearity. 
As part of the work, we establish a clear connection between the proposed
identification scheme and our recently developed nonparametric method based on
the stable spline kernel.
\end{abstract}


\section{Introduction}
 A nonlinear system is usually called an
Hammerstein system when it is composed of two blocks in cascade, the first being
a static nonlinearity and the second a linear time-invariant (LTI) dynamic
system~\cite{ljung1999system}.

There are several areas in science and engineering where Hammerstein systems
find applications, see
e.g.~\cite{hunter1986identification},~\cite{westwick2001separable},~\cite{bai2009identification}. For this reason, in
recent years Hammerstein system identification has become a popular and
rather active research topic~\cite{schoukens2011parametric},~\cite{han2012hammerstein}.

Several approaches have been proposed for Hammerstein system identification. For
instance, in~\cite{greblicki1986identification} a kernel-based regression method
is described,~\cite{greblicki2002stochastic} proposes an identification
approach based on a stochastic approximation, while~\cite{goethals2005subspace}
focuses on subspace methods.
In~\cite{westwick2001separable},~\cite{bai2004convergence} and~\cite{rangan1995new} iterative methods based on
least-squares are studied.

An interesting approach was proposed by Er-Wei Bai in~\cite{bai1998optimal}.
Here, the static nonlinearity is modeled as the linear combination of $p$ basis
functions, while the LTI system is assumed to be a finite impulse response (FIR)
with $n$ coefficients. The Hammerstein system is then modeled as a linear
regression, where the regressor vector is $np$-dimensional. Since it contains
all the combinations of the nonlinearity coefficients and the FIR coefficients,
this vector is usually called \emph{overparameterized vector}. Its estimate is
obtained via least-squares and then it is decomposed in order to obtain the
nonlinearity coefficients and the impulse response. Albeit proven to be
asymptotically consistent, the whole procedure suffers of two main drawbacks.
First, since it relies on a least-squares estimation of a possibly very
high-dimensional vector, the final estimates may suffer from high
variance~\cite{ljung1999system}. Second, the procedure does not guarantee that the estimated
$np$-dimensional vector can be exactly decomposed to obtain the nonlinearity
coefficients and the FIR system, and thus approximations are required.

In this paper, we propose a regularization technique to curb the variance of
the estimates of the overparameterized vector. Similarly to~\cite{bai1998optimal}, we model the
Hammerstein system dynamics using the aforementioned overparameterized vector, then we solve
the regression problem relying on a kernel-based approach. To this end, we introduce a
novel kernel, called the \emph{Kronecker overparameterized} (KOP) kernel, which is
the composition of a rank-one positive semi-definite matrix and the so-called
\emph{first-order stable spline kernel}
(see~\cite{pillonetto2010new},~\cite{pillonetto2011prediction},~\cite{bottegal2013regularized}, and~\cite{pillonetto2014kernel} for details). The
structure of this kernel depends on a few parameters (also called
\emph{hyperparameters} in this context), which we need to estimate from data.
This task is addressed by an \emph{empirical Bayes}
approach~\cite{maritz1989empirical}, that is to say by maximizing the marginal likelihood (ML) of the
output. Once the kernel parameters are fixed, the overparameterized vector is
estimated via regularized least squares~\cite{pillonetto2014kernel}. Equivalently, we can think
of the overparameterized vector as a Gaussian random vector with zero-mean and
covariance matrix given by the KOP kernel. With this interpretation, the estimate corresponds to the
minimum mean square error estimate in the Bayesian sense~\cite{wahba1990spline}.

A contribution of this paper is to reveal some interesting properties of the
estimated overparameterized vector provided by the proposed method. We prove
that, as opposed to~\cite{bai1998optimal}, this estimate can be decomposed exactly in order to obtain the nonlinearity
coefficients and the LTI system impulse response, with no loss of information due to approximations. The concept of \emph{exact
decomposition} will be made clear throughout the paper. We also demonstrate strong
connections with our recently proposed method~\cite{risuleo2015kernel},
effectively proving that, although the two approaches are inherently different, the estimates obtained with the two methods are
equivalent.
Finally, we show, through several numerical experiments, that the
proposed method outperforms both the algorithm proposed in~\cite{bai1998optimal} and the standard \matlab\ system identification toolbox function for Hammerstein system
identification.

The paper is organized as follows. In the next section, we formulate the
Hammerstein system identification problem. In Section~\ref{sec:overparameterized}, we describe the modeling approach based on
overparameterized vectors. In Section~\ref{sec:new_approach}, we introduce the
proposed identification scheme, and we give some theoretical background in
Section~\ref{sec:properties}. Numerical experiments are illustrated in Section~\ref{sec:experiments}. Some conclusions end the paper.

\section{Problem formulation}
We consider a stable single input single output discrete-time system described by the
following time-domain relations (see Figure~\ref{fig:hammerstein})
\begin{equation} \label{eq:hammerstein_basic}
\begin{array}{lcl}
	w_t &=& f(u_t) \\
  y_t &=& \sum_{k=1}^{\infty}g_k w_{t-k} + e_t \,.
\end{array}
\end{equation}
In the above equation, $f(\cdot)$ represents a (static) nonlinear function
transforming the measurable input $u_t$ into the unavailable signal $w_t$, which
in turn feeds a strictly causal stable LTI system, described by the impulse
response $g_t$. The output measurements of the system $y_t$ are corrupted by
white Gaussian noise, denoted by $e_t$, which has unknown variance $\sigma^2$.
Following a standard approach in Hammerstein system identification (see e.g.~\cite{bai1998optimal}), we assume that $f(\cdot)$ can be modeled as a
combination of $p$ known basis functions ${\{\phi_i\}}_{i=1}^p$, namely
\begin{equation}
  w_t = f(u_t) = \sum_{i=1}^p c_i \phi_i(u_t) \,,
\end{equation}
where the coefficients $c_i$ are unknown.
\begin{figure}[H]
  \centering
  \includegraphics[width=0.9\columnwidth]{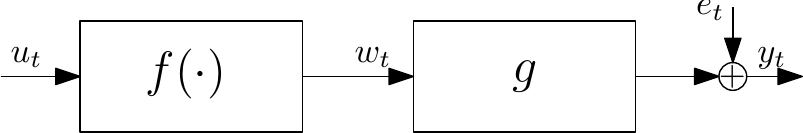}
  \caption{Block scheme of the Hammerstein system.}\label{fig:hammerstein}
\end{figure}
We assume that $N$ input-output samples are collected, and denote them by
${\{u_t\}}_{t=0}^{N-1}$, ${\{y_t\}}_{t=1}^{N}$.
For notational convenience, we also assume null initial conditions.  Then, the
system identification problem we discuss in this paper is the problem of
estimating $n$ samples of the impulse response, say ${\{ g_t\}}_{t=1}^n$ (where
$n$ is large enough to capture the system dynamics), as well as the $p$
coefficients ${\{c_i\}}_{i=1}^p$  characterizing the static nonlinearity $f(\cdot)$.

\subsection{Non-uniqueness of the identified system}\label{sec:identifiability}
It is well-known (see e.g.~\cite{bai2004convergence}) that the two components of a Hammerstein system can be determined up to a scaling
factor. In fact, for any $\alpha \in \mathbb R$, every pair $(\alpha g_t,\,\frac{1}{\alpha}f(\cdot))$, describes the input-output relation equally well.
As suggested in~\cite{bai2004convergence}, we will circumvent this non-uniqueness issue by introducing the following assumption:
\begin{assumption}\label{ass:uniqueness}
The impulse response has unitary $\ell_2$ gain, i.e.
$\norm{g}_2=1$, and the sign of its first non-zero element is positive.
\end{assumption}
\subsection{Notation and preliminaries}
Given a sequence of scalars ${\{a_t\}}_{t=1}^m$, we denote by $a$ its vector representation, i.e.
$$
a =  \begin{bmatrix} a_1 \\ \vdots \\ a_m \end{bmatrix} \quad \in \mathbb R^m \,.
$$

We reserve the symbol $\otimes$ to indicate the Kronecker product of two matrices (or vectors).  We will make use of the bilinear property
$$
(A \otimes B)(C \otimes D) = AC \otimes BD \,,
$$
where $A$, $B$, $C$, and $D$ have proper dimensions. Denoting by $\vect (A)$ the
columnwise vectorization of a matrix $A$, we recall that, for any two vectors
$a$ and $b$, $\vect (ab^T) = b \otimes a$. Given a vector $a \in
\mathbb{R}^{np}$, we introduce its $n \times p$ reshape as
$$
\mathcal R_{n,p}(a) := \matrix{a_1 & \cdots & a_{n(p-1)+1} \\ \vdots & &  \vdots \\ a_n & \cdots & a_{np} } \in \mathbb R^{n \times p} \,.
$$

Given $a \in  \mathbb R^m$, The symbol $\T_n(a)$ denotes the $m \times n$
Toeplitz matrix whose entries are elements of $a$, namely
\begin{equation} \label{eq:toeplitz}
 \T_n(a)\! =\! \begin{bmatrix} a_1 & 0  & & \cdots & 0 \\ a_2 & a_1 & 0 & \cdots
 & 0 \\ \vdots & \vdots &  &\ddots & \vdots \\ a_{m-1} & a_{m-2}  &  \cdots &
 a_{m-n}& 0 \\ a_{m} & a_{m-1}  &  \cdots &\cdots &  a_{m-n+1}  \end{bmatrix} \!
 \in \, \mathbb{R}^{m \times n} \! .
\end{equation}

Let
\begin{equation} \label{eq:shift_matrix}
S = \matrix{0 &\cdots& 0& 0 \\ & & & 0 \\ & I_{m-1} & & \vdots \\ & & & 0} \in \mathbb{R}^{m \times p}\,
\end{equation}
and
\begin{equation}
    P = \matrix{I & S & S^2 & \cdots & S^{n-1}}\,.
    \label{eq:defineP}
\end{equation}
We have the following result, which will be used throughout the paper.
\begin{lemma}\label{lemma:defineP}
  Let $a \in \mathbb R^{m}$  and $\T_n(a)$ be as in~\eqref{eq:toeplitz}. Then
  \begin{equation}
    P\big(I \kron a\big) = \T_n(a)\,,
    \label{eq:lemmaP}
    \end{equation}
\end{lemma}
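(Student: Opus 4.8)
The plan is to verify the identity column-block by column-block. Write $I = I_n$ and partition $I_n \kron a$ into its $n$ column-blocks: the $j$-th block (for $j = 1, \dots, n$) is the $m$-vector obtained by placing $a$ in the slot corresponding to the $j$-th column of $I_n$. Multiplying $P = \matrix{I & S & S^2 & \cdots & S^{n-1}}$ against $I_n \kron a$ therefore picks out, in its $j$-th column, the product $S^{j-1} a$. So the claim reduces to showing that the $j$-th column of $\T_n(a)$ equals $S^{j-1} a$ for each $j$.

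Next I would compute $S^{j-1} a$ explicitly. From the definition~\eqref{eq:shift_matrix}, $S$ acts on an $m$-vector by shifting its entries down by one position and inserting a zero at the top (equivalently $S e_i = e_{i+1}$ for $i < m$ and $S e_m = 0$). Hence $S^{j-1} a$ is the vector $(0, \dots, 0, a_1, a_2, \dots, a_{m-j+1})^T$ with $j-1$ leading zeros. Comparing with~\eqref{eq:toeplitz}, the $j$-th column of $\T_n(a)$ is exactly $(0, \dots, 0, a_1, \dots, a_{m-j+1})^T$ with $j-1$ leading zeros — the entries on and below the diagonal are $a_1, a_2, \dots$ read downward, and there are $j-1$ zeros above. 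This matches, so the two matrices agree column by column, which gives~\eqref{eq:lemmaP}.

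The only mild subtlety — which is really just bookkeeping rather than a genuine obstacle — is getting the index ranges to line up: one has to check that $S^{j-1}$ annihilates the last $j-1$ entries of $a$ in the right way so that the bottom entry of column $j$ is $a_{m-j+1}$, matching the pattern $a_m, a_{m-1}, \dots, a_{m-n+1}$ along the last row of $\T_n(a)$ in~\eqref{eq:toeplitz}. A clean way to make this airtight is to argue entrywise: the $(i,j)$ entry of $P(I_n \kron a)$ is the $i$-th entry of $S^{j-1}a$, which equals $a_{i-j+1}$ when $1 \le i-j+1 \le m$ and $0$ otherwise; this is precisely the $(i,j)$ entry of the Toeplitz matrix in~\eqref{eq:toeplitz}. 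I expect no real difficulty here — the result is essentially a restatement of how powers of the shift matrix build up a Toeplitz (convolution) matrix.
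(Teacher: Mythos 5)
Your proof is correct and follows essentially the same route as the paper: the paper's argument is precisely the observation that the $j$-th column of $\T_n(a)$ is $S^{j-1}a$, so that $\T_n(a) = \matrix{a & Sa & \cdots & S^{n-1}a} = P(I\kron a)$. Your additional entrywise check (the $(i,j)$ entry being $a_{i-j+1}$ when defined and $0$ otherwise) is just a more explicit version of the same bookkeeping.
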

\begin{proof}
 Note that
  \begin{align}
    \T_n(a) &= \matrix{a & Sa & S^2a& \cdots & S^{n-1}a}\nonumber\\
  &= \matrix{I & S & S^2 & \cdots& S^{n-1}}\matrix{a & & \\ & \ddots & \\ & & a} \nonumber\\
    &=P\big( I \kron a\big) \,,
  \end{align}
  which proves the statement.
\end{proof}
Based on the equality stated by Lemma~\ref{lemma:defineP}, we extend the
Toeplitz notation to matrices, that is, given $A \in \mathbb R^{m \times p}$ we
write
\begin{equation}
\T_n(A) = P\big(I_n \kron A\big) \quad \in \mathbb{R}^{m \times np}\,.
    \label{eq:lemmaF}
  \end{equation}

\section{Identification via overparameterized models}\label{sec:overparameterized}
In this paper, we deal with overparameterized approaches to Hammerstein system
identification. To this end, we construct the matrices
\begin{equation} \label{eq:effona}
 F \triangleq \matrix{\phi_1(u_0)  & \cdots & \phi_p(u_0)\\
                 \vdots& \vdots & \vdots\\
                 \phi_1(u_{N-1})  & \cdots & \phi_p(u_{N-1})\\} \,,\,
\end{equation}
and
\begin{equation} \label{eq:phiona}
\Phi \triangleq  \T_n(F) \quad \in \mathbb{R}^{N \times np}\,.
\end{equation}
Then, we can express the Hammerstein system dynamics problem by means of the
linear regression model (see also~\cite{bai1998optimal})
\begin{equation} \label{eq:bai_regression}
y = \Phi \vartheta + e \,,
\end{equation}
where
\begin{equation} \label{eq:theta}
\vartheta = g \otimes c \, \in \mathbb{R}^{np} \,.
\end{equation}
This vector contains the $n+p$ unknown parameters of the Hammerstein model. Thus, it
constitutes an overparameterization with respect to the original parameters
$c$ and $g$.
A desirable property of any estimate of $\vartheta$ is that it should be
expressible as~\eqref{eq:theta},
namely as a Kronecker product of two vectors. We formalize
this concept in the following definition.
\begin{definition}
Let $\vartheta \in \mathbb{R}^{np}$. We say that $\vartheta$ is a Kronecker
overparameterized (KOP) vector if there exist $g \in \mathbb{R}^{n}$ and $c\in
\mathbb{R}^{p}$ such that~\eqref{eq:theta} holds.
\end{definition}
The following lemma gives a property of KOP vectors.
\begin{lemma}\label{th:hop_vec}
Let $\vartheta \in \mathbb{R}^{np}$ be a KOP vector. Then $\mathcal R_{n,p}(\vartheta) = cg^T$ and thus $\mathcal R_{n,p}(\vartheta)$ is a rank-one matrix.
\end{lemma}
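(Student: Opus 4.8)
The plan is to unwind the definition of $\mathcal R_{n,p}$ applied to the Kronecker product $\vartheta = g \otimes c$ and simply recognize the resulting array as $cg^T$. First I would recall the standard identity relating the Kronecker product and vectorization, namely $\vect(ab^T) = b \otimes a$, which is stated in the notation section of the paper. Taking $a = c \in \mathbb R^p$ and $b = g \in \mathbb R^n$ gives $\vect(cg^T) = g \otimes c = \vartheta$. So $\vartheta$ is precisely the columnwise vectorization of the $p \times n$... wait — I must be careful with dimensions here, since $cg^T \in \mathbb R^{p\times n}$ but $\mathcal R_{n,p}(\vartheta)$ is declared to live in $\mathbb R^{n\times p}$. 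The cleanest route is to observe that $\mathcal R_{n,p}$ is defined to undo a \emph{row-wise}-type reshuffling: reading its displayed definition, the $(i,j)$ entry of $\mathcal R_{n,p}(a)$ is $a_{n(j-1)+i}$. Hence I would compute the $(i,j)$ entry of $\mathcal R_{n,p}(\vartheta)$ directly as $\vartheta_{n(j-1)+i}$.

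The key computational step is then to identify the entries of $\vartheta = g\otimes c$. By the definition of the Kronecker product of vectors, the block structure is $\vartheta = \bigl[\, g_1 c^T \ \ g_2 c^T \ \cdots \ g_n c^T \,\bigr]^T$, so the entry in position $n(j-1)+i$ — which lies in the $j$-th block of length $n$... no: the blocks of $g \otimes c$ have length $p$, not $n$. Let me instead index correctly: $(g\otimes c)_{p(k-1)+\ell} = g_k c_\ell$ for $k = 1,\dots,n$ and $\ell = 1,\dots,p$. So I need to match the flat index $n(j-1)+i$ used by $\mathcal R_{n,p}$ against $p(k-1)+\ell$. These two indexing conventions are different, which is exactly the subtlety. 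The safe path, and the one I would actually write, is to verify the matrix identity $\mathcal R_{n,p}(g\otimes c) = cg^T$ by checking it columnwise: the $j$-th column of $\mathcal R_{n,p}(\vartheta)$ is, by the displayed definition, the subvector $(\vartheta_{n(j-1)+1},\dots,\vartheta_{nj})^T$, and I would show this equals $c_j\, g$, i.e. the $j$-th column of $cg^T$. That reduces the claim to: the entries $\vartheta_{n(j-1)+1},\dots,\vartheta_{nj}$ of $g\otimes c$ are exactly $c_j g_1,\dots,c_j g_n$.

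The main obstacle is therefore purely bookkeeping: reconciling the "reshape" ordering baked into $\mathcal R_{n,p}$ with the ordering of a Kronecker product $g\otimes c$. In fact the displayed definition of $\mathcal R_{n,p}$ is tailored precisely so that it inverts $a \mapsto$ (stack the columns of an $n\times p$ matrix), and since $g \otimes c = \vect(cg^T)$ with $cg^T \in \mathbb R^{p\times n}$ having columns $g_1 c, \dots, g_n c$, one sees the flat vector $\vartheta$ is the concatenation $[g_1 c^T,\dots,g_n c^T]^T$. Hmm, that again gives blocks of length $p$; for $\mathcal R_{n,p}$ to recover a rank-one $n\times p$ matrix we actually want $\vartheta = \vect(g c^T)$ with $gc^T \in \mathbb R^{n\times p}$, whose columns are $c_1 g,\dots,c_p g$ — and indeed $\vect(gc^T) = c \otimes g$, not $g\otimes c$. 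I will resolve this by using the paper's own convention consistently: with $\vect(cg^T) = g\otimes c$ (their stated identity, $\vect(ab^T)=b\otimes a$ with $a=c$, $b=g$), the matrix $cg^T$ is $p\times n$; its transpose $gc^T$ is $n\times p$ and is manifestly rank one. So once the index arithmetic is pinned down I would conclude $\mathcal R_{n,p}(\vartheta)$ equals $cg^T$ (as the statement asserts, up to the authors' layout convention), hence has rank one since $\rank(cg^T)\le 1$ and it is nonzero whenever $c,g\ne 0$. The whole argument is a couple of lines of index matching plus the one-line rank bound for outer products.
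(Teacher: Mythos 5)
Your proof is correct and follows essentially the same route as the paper's one-line argument, which simply invokes the identity $\vect(cg^T) = g \otimes c$ and reads off the rank-one structure. Your extra index bookkeeping also correctly exposes a genuine notational slip in the statement: with the paper's literal definition of $\mathcal R_{n,p}$ as an $n\times p$ columnwise reshape, $g\otimes c$ (whose length-$p$ blocks are $g_k c$) actually devectorizes to the $p\times n$ matrix $cg^T$, so the statement holds only up to a transpose/ordering convention --- but the substantive conclusion you reach, that the reshaped KOP vector is the rank-one outer product of $c$ and $g$, is exactly the content the paper intends and uses later.
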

\begin{proof}
Follows from the identity $\vect (cg^T) = g \otimes c$.
\end{proof}

Under Assumption~\ref{ass:uniqueness}, an $np$-dimensional KOP vector
$\vartheta$ can be uniquely decomposed into the $n$- and $p$- dimensional
vectors $g$ and $c$.
\begin{prop}\label{prop:decompose}
  Let $\vartheta\in \mathbb R^{np}$ be a KOP vector. Let $\tilde g$ be the $i$th
  row and $\tilde c$  and the $j$th column of $\mathcal R_{n,p}(\vartheta)$, define
  \begin{equation}\label{eq:reshape}
    g = \frac{\tilde g }{\|\tilde g\|}\,\sign(\tilde g_1), \qquad c =
  \frac{\tilde c}{\tilde g_j}
    \,\|\tilde g\|\,\sign(\tilde g_1).
  \end{equation}
  Then $\vartheta = g\otimes c$, $\|g\|=1$ and $g_1 >0$.
\end{prop}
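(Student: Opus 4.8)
The plan is to exploit the rank-one structure of the reshaped KOP vector guaranteed by Lemma~\ref{th:hop_vec}: once every row and column of $\mathcal R_{n,p}(\vartheta)$ is recognized as a scalar multiple of a fixed vector, the normalizations in~\eqref{eq:reshape} simply strip off these scalars and impose Assumption~\ref{ass:uniqueness}.

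First I would use the KOP hypothesis to fix a (non-normalized) decomposition $\vartheta = \bar g \kron \bar c$ with $\bar g \in \mathbb R^{n}$, $\bar c \in \mathbb R^{p}$, and apply Lemma~\ref{th:hop_vec} to obtain $\mathcal R_{n,p}(\vartheta) = \bar c\,\bar g^{T}$. From this factorization the $i$th row is $\tilde g = \bar c_i\,\bar g$ and the $j$th column is $\tilde c = \bar g_j\,\bar c$, so each is proportional to one of the original factors. At this stage I would also state explicitly the nondegeneracy that the argument requires — namely $\bar c_i \neq 0$, $\bar g_j \neq 0$ and $\bar g_1 \neq 0$ — so that $\norm{\tilde g}$, $\sign(\tilde g_1)$ and the division by $\tilde g_j$ in~\eqref{eq:reshape} all make sense; for a genuine Hammerstein system such indices $i,j$ always exist.

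Next I would substitute $\tilde g = \bar c_i\bar g$ into the expression for $g$ in~\eqref{eq:reshape}. Writing $\norm{\tilde g} = |\bar c_i|\,\norm{\bar g}$ and $\sign(\tilde g_1) = \sign(\bar c_i)\sign(\bar g_1)$, the factor $\bar c_i$ cancels against its modulus and its sign, leaving $g = \bar g\,\sign(\bar g_1)/\norm{\bar g}$; hence $\norm{g}=1$ and $g_1 = |\bar g_1|/\norm{\bar g} > 0$ follow at once. The identical cancellation applied to the expression for $c$ — where now both $\bar c_i$ and $\bar g_j$ drop out — gives $c = \norm{\bar g}\,\sign(\bar g_1)\,\bar c$. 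Finally, since scalars factor through the Kronecker product, $g \kron c = \big(\sign(\bar g_1)/\norm{\bar g}\big)\big(\norm{\bar g}\,\sign(\bar g_1)\big)\,(\bar g \kron \bar c) = \bar g \kron \bar c = \vartheta$, using $\sign(\bar g_1)^{2} = 1$.

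I do not anticipate a real obstacle here: the statement is essentially a controlled change of scale. The only points needing care are the sign bookkeeping — tracking $\sign(\bar c_i)$ and $\sign(\bar g_1)$ separately and recombining them correctly — and the nondegeneracy, which I would make explicit rather than leave tacit, since the claim $g_1 > 0$ is slightly stronger than Assumption~\ref{ass:uniqueness} and genuinely relies on $\bar g_1 \neq 0$.
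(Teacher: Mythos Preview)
Your argument is correct. You introduce an auxiliary decomposition $\vartheta=\bar g\kron\bar c$, identify $\tilde g=\bar c_i\bar g$ and $\tilde c=\bar g_j\bar c$, and then track how the scalars $\bar c_i$, $\bar g_j$, $\norm{\bar g}$ and $\sign(\bar g_1)$ cancel through the two formulas in~\eqref{eq:reshape}. That is clean and fully explicit; your remark about the nondegeneracy assumptions ($\bar c_i\neq 0$, $\bar g_j\neq 0$, $\bar g_1\neq 0$) is a useful clarification that the paper leaves tacit.

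The paper takes a slightly different, more direct route: it never names an auxiliary pair $(\bar g,\bar c)$, but simply multiplies out the definitions to check that $c_ig_j=\tilde c_i$, i.e.\ that the product $cg^T$ reproduces the $(i,j)$ entry of $\mathcal R_{n,p}(\vartheta)^T$, and then appeals to the rank-one structure from Lemma~\ref{th:hop_vec} to conclude the full matrix identity. The norm and sign conditions on $g$ are verified exactly as you do. Your version has the advantage of making the scalar cancellations completely visible and of isolating precisely where each nondegeneracy hypothesis enters; the paper's version is terser but relies on the reader to fill in why a single entry check suffices for a rank-one matrix.
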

\begin{proof}
  From~\eqref{eq:reshape} we have that $c_i g_j = \tilde c_i$, so $c_i g_j$ is the $i,j$th
  element of ${\mathcal R_{n,p}(\vartheta)}^T$ hence ${\mathcal
  R_{n,p}(\vartheta)}^T=cg^T$. In addition
  \begin{equation}
    \|g\| = \left\|\frac{\tilde g }{\|\tilde g\|}\,\sign(\tilde g_1)\right\|
    =\frac{\|\tilde g \|}{\|\tilde g\|}\,|\sign(\tilde g_1)| = 1\,,
  \end{equation}
  and
  \begin{equation}
    g_1 = \frac{\tilde g_1 }{\|\tilde g\|}\,\sign(\tilde g_1)
    =\frac{|\tilde g_1 |}{\|\tilde g\|}>0\,,
  \end{equation}
  which completes the proof.
\end{proof}

\subsection{A review of an overparameterized method for Hammerstein system identification}\label{sec:er_wei_method}
In this section we review the identification procedure proposed in~\cite{bai1998optimal}, which constitutes the starting point of our regularized
kernel-based method. Given the model~\eqref{eq:bai_regression}, consistent
estimates of $c$ and $g$ can be obtained with the following steps (see~\cite{bai1998optimal} for details about consistency). First, we compute the
least-squares estimate
\begin{equation} \label{eq:ls_theta}
  \hat \vartheta^{\textrm{LS}} = {(\Phi^T\Phi)}^{-1}\Phi^T y \,.
\end{equation}
Then, since we know that $\vartheta$ is a KOP vector, that is, the reshaping of
$\vartheta$ into an $n \times p$ matrix must be rank-one (Lemma~\ref{th:hop_vec}), we  approximate $\hat \vartheta^{\textrm{LS}}$ to a KOP
vector by approximating $\mathcal R_{n,p}(\hat \vartheta^{\textrm{LS}})$ to a
rank-one matrix. This can be done by solving the problem
\begin{equation} \label{eq:rank1}
\begin{array}{cl}
\textrm{minimize}\ & \|X - \mathcal R_{n,p}(\hat \vartheta^{\textrm{LS}})\|_F \\
\textrm{s.t.} & \rank \,X = 1 \,,
\end{array}
\end{equation}
where $\|\cdot\|_F$ denotes the Frobenius norm.
Expressing $\mathcal R_{n,p}(\hat \vartheta^{\textrm{LS}})$ by means of its
singular value decomposition, i.e.
\begin{align} \label{eq:SVD_ls}
\mathcal R_{n,p}(\hat \vartheta^{\textrm{LS}}) & = USV^T \\
  & = [u^1 \,\, \ldots \,\, u^p] \diag\{s_1,\,\ldots,\,s_p\} {[v^1 \,\, \cdots
\,\, v^p]}^T \nonumber,
\end{align}
we find that the solution of~\eqref{eq:rank1} is $X = u^1 s_1 v^{1T}$. Then
$\hat g = v^1 \sign (v_1^1)$ (since we have assumed $\|g\|_2=1$) and $\hat c =
s_1 u^1 \sign (v_1^1)$.

Note that, since in general $\vartheta^{\textrm{LS}}$ in not a KOP vector,
generally $s_2,\dots,s_p>0$ and the truncation required by the approximation~\eqref{eq:rank1} introduces a bias in the estimates $\hat g$ and $\hat c$ that
degrades performance (see~\cite{hjalmarsson2004direct}). Another drawback of this method
is that it requires the least-squares estimate of the possibly high-dimensional
vector $\hat \vartheta^{\textrm{LS}}$. Hence, despite its consistency property,
the procedure can suffer from high variance in the estimates when $N$ is not
(very) large.

\section{A regularized overparameterization method for Hammerstein system identification}\label{sec:new_approach}
In the previous section, we have seen that the estimator proposed in~\cite{bai1998optimal} suffers from high variance and from a bias that degrades
performance. To control the variance of the estimate, we can use regularization
(for a full treatment, see~\cite{bishop2006pattern},~\cite{hastie2005elements}); this
means we have to select some properties we want to enforce on the estimated
vector.  As we have pointed out in the previous section, a vector
$\hat \vartheta$ is a good candidate estimate of the unknown vector if it satisfies
the following properties:
\begin{enumerate}
\item $\hat \vartheta$ is a KOP vector, so that it can be decomposed as in~\eqref{eq:theta};
\item The mean square error of $\hat \vartheta$ is low, so that the estimated
  values $\hat g$ and $\hat c$ are close to the true values.
\end{enumerate}
A natural approach to incorporate (at least) the second property is based on
regularization or, equivalently, on the Gaussian regression
framework~\cite{rasmussen2006gaussian}. Thus, we model $\vartheta$ as a Gaussian random vector, namely
\begin{equation} \label{eq:prior_theta}
\vartheta \sim \mathcal{N}(0,\,H(\rho)) \,.
\end{equation}
where the covariance matrix (also called a \emph{kernel}) $H(\rho)$ is
parameterized by the vector $\rho$. The structure of $H(\rho)$ determines the
properties of the realizations from~\eqref{eq:prior_theta} and, consequently, of
the estimates of $\vartheta$. In the next subsection, we focus on designing a
kernel suitable for Hammerstein system identification that incorporates also the
first property.
\subsection{The KOP kernel}
We first recall the kernel-based identification approach for LTI systems
proposed in~\cite{pillonetto2010regularized},~\cite{pillonetto2011prediction}, and we model also $g$ as a realization
of a zero-mean $n$-dimensional Gaussian process. Then we have
\begin{equation} \label{eq:prior_g}
    g \sim \N(0, K_{\beta}) \,,
\end{equation}
where the kernel $K_{\beta}$ corresponds to the so-called \emph{first-order
stable spline kernel} (or \emph{TC kernel} in~\cite{chen2012estimation}). It is defined as
\begin{equation}\label{eq:ssk1}
  {\{K_\beta\}}_{i,j} \triangleq \beta^{ \max(i,j)} \,,
\end{equation}
where the hyperparameter $\beta$ is a scalar in the interval $[0,\,1)$. 
  The
choice of this kernel is motivated by the fact that it promotes BIBO stable and
smooth realizations. The decay velocity of these realizations is regulated by
$\beta$.
Typical formulations of the stable spline kernel (see e.g.~\cite{pillonetto2014kernel}) include a scaling factor multiplying the kernel, in
order to capture the amplitude of the unknown impulse response. Here such an
hyperparameter is redundant, as we are working under Assumption~\ref{ass:uniqueness}.

To reconcile~\eqref{eq:prior_g} with~\eqref{eq:prior_theta}, we need to ensure
that the transformation $g\kron c$ is a Gaussian vector, when $g$ is Gaussian.
This is possible if $c$ is a (deterministic) $p$-dimensional vector. In this
case $\vartheta$ is an $np$-dimensional Gaussian random vector with covariance
matrix
\begin{equation}
  H(\rho) = \E[\vartheta\vartheta^T] = \E[(g \otimes c){(g \otimes c)}^T] = K_\beta \otimes cc^T \,,
\end{equation}
which is parameterized by the vector $\rho = {[\beta\,\, c^T]}^T$. In this way, we
have defined a new kernel for system identification based on overparameterized vector regression. We formalize this
in the following definition.
\begin{definition}
We the define the Kronecker overparameterized (KOP) kernel as
\begin{equation} \label{eq:hop_kernel}
  H(\rho) \triangleq K_\beta \otimes cc^T ,\, \quad \rho = {[\beta\,\, c^T]}^T \,,
\end{equation}
where $K_\beta$ is as in~\eqref{eq:ssk1}.
\end{definition}
Note that $H(\rho)$ is rank-deficient, its rank being equal to $n$.
Rank-deficient kernels for system identification have also been studied in~\cite{chen2013rank}.
\subsection{Estimation of the overparameterized vector
\texorpdfstring{$\vartheta$}{}}
We now derive the estimation procedure for the vector $\vartheta$. Recalling
that the noise distribution is Gaussian and given the Gaussian description of
$\vartheta$~\eqref{eq:prior_theta}, the joint distribution of $y$ and
$\vartheta$ is Gaussian. Hence, we can write
\begin{equation} \label{eq:joint_Gaussian}
p\left(\begin{bmatrix} y \\ \vartheta \end{bmatrix};\rho,\,\sigma^2 \right) \sim \mathcal N \left( \begin{bmatrix} 0\\0 \end{bmatrix} , \begin{bmatrix} \Sigma_y & \Sigma_{y\vartheta} \\ \Sigma_{\vartheta y} &  H(\rho) \end{bmatrix} \right)\,,
\end{equation}
where $\Sigma_{y\vartheta} = \Sigma_{\vartheta y}^T =   \Phi H(\rho)$ and
\begin{equation} \label{eq:cov_y}
\Sigma_y =  \Phi H(\rho) \Phi^T + \sigma^2I \,.
\end{equation}
In~\eqref{eq:joint_Gaussian} we have highlighted the dependence of the joint
distribution on the vector $\rho$ and the noise variance $\sigma^2$. Assume
these quantities are given; then the minimum mean square error estimate of
$\vartheta$ can be computed as (see e.g.~\cite{anderson2012optimal})
\begin{align} \label{eq:theta_estim}
\hat \vartheta &= \E[\vartheta|y;\,\rho,\,\sigma^2] \\
	& = H(\rho) \Phi^T \Sigma_y^{-1}y \nonumber \,.
\end{align}
To be able to compute~\eqref{eq:joint_Gaussian} we first need to determine
$\rho$ and $\sigma^2$. This can be done by maximizing the ML of the output data
(see e.g.~\cite{pillonetto2014tuning}). Then we have
\begin{align} \label{eq:marg_lik}
\hat \rho,\,\hat \sigma^2 & = \arg \max p(y;\,\rho,\,\sigma^2) \nonumber\\
& = \arg \min_{\rho,\sigma^2} \Big[\log \det \Sigma_y + y^T \Sigma_y^{-1} y
\Big]\,.
\end{align}
The resulting estimation procedure for $\vartheta$ can be summarized by the
following two steps.
\begin{enumerate}
\item Solve~\eqref{eq:marg_lik} to obtain $\hat \rho,\,\hat \sigma^2$.
\item Compute~\eqref{eq:theta_estim} using the estimated parameters.
\end{enumerate}
Having obtained $\hat\vartheta$, it remains to establish how to decompose it in
order to obtain the estimates $\hat g$ and $\hat c$. In the next section we
shall see that, using the proposed approach, such an operation becomes natural.

\section{Properties of the estimated overparameterized vector}\label{sec:properties}
In this section, we analyze some properties of the regularized
overparameterization estimate of $\vartheta$. In particular, we show that the
estimates produced by~\eqref{eq:theta_estim} are KOP vectors. Then, we show that this procedure leads
to exactly the same estimator as the one we proposed in~\cite{risuleo2015kernel}; where the coefficients of the nonlinearity were
considered as model parameters and not included among the kernel hyperparameters.

To further specify the equivalence, we first briefly review the Hammerstein
system identification approach proposed in~\cite{risuleo2015kernel} which
is based on a different Gaussian process assumption.

\subsection{A review of the method proposed in \texorpdfstring{~\cite{risuleo2015kernel}}{[25]}}
 Let $W = \T_n(w) = \T_n(Fc)$. Then we can model the measurements with the
 linear relation
\begin{equation} \label{eq:hammerstein_g}
y = Wg+e \,.
\end{equation}
Modeling $g$ as a Gaussian random vector with covariance given by the stable
spline kernel~\eqref{eq:ssk1}, we notice that a joint Gaussian description holds
between $y$ and $g$. Hence we can write
\begin{equation} \label{eq:joint_Gaussian_g}
p\left(\begin{bmatrix} y \\ g \end{bmatrix};c,\,\beta,\,\sigma^2 \right) \sim \mathcal N \left( \begin{bmatrix} 0\\0 \end{bmatrix} , \begin{bmatrix} \Sigma_{y,2} & \Sigma_{yg} \\ \Sigma_{g y} &  H(\rho) \end{bmatrix} \right)\,,
\end{equation}
where $\Sigma_{yg} = W K_\beta$ and $\Sigma_{y,2} = W K_\beta W^T + \sigma^2I$.
Note that~\eqref{eq:joint_Gaussian_g} depends on the parameters $c$, $\beta$ and
$\sigma^2$. These parameters are estimated via ML maximization, that is by
solving
\begin{equation} \label{eq:marg_lik2}
  \hat c,\,\hat \beta,\,\hat \sigma^2 = \arg \min_{c,\beta,\sigma^2}\Big[\log
  \det \Sigma_{y,2} + y^T \Sigma_{y,2}^{-1} y\Big] \,.
\end{equation}
The minimum mean square estimate of $g$ is then computed as
\begin{equation} \label{eq:g_method_sysid}
\hat g  = \E[g|y,\,\hat c,\,\hat \beta,\,\hat \sigma^2] = K_\beta  W^T \Sigma_{y,2}^{-1} y \,.
\end{equation}
In the next section, we point out the strong connection
between~\eqref{eq:g_method_sysid} and the estimate~\eqref{eq:theta_estim}, produced by
the KOP kernel-based regression approach.

\subsection{The \texorpdfstring{estimate~\eqref{eq:theta_estim}}{estimate (26)} is a KOP vector}
In this section we prove that, when the KOP kernel-based method is used to
estimate~\eqref{eq:theta_estim}, the resulting estimates can be decomposed as
Kronecker products of lower-dimensional vectors and thus they are KOP vectors.
Before arriving at this result we show the equivalence between the output
measurement models~\eqref{eq:bai_regression} and~\eqref{eq:hammerstein_g}.
\begin{lemma}\label{lemma:kernels}
Let $W = \T_n(Fc)$ and $\Phi$ as in~\eqref{eq:phiona}. Then
    \begin{equation}
      \Phi H(\rho) \Phi^T = W K_\beta W \,,
      \label{eq:lemma_kernels}
    \end{equation}
where $H(\rho)$ and $K_\beta$ are the KOP and the stable spline kernels.
\end{lemma}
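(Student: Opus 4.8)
The plan is to reduce both sides of~\eqref{eq:lemma_kernels} to the same expression by means of the bilinear property $(A\kron B)(C\kron D)=AC\kron BD$ together with the Toeplitz factorization of Lemma~\ref{lemma:defineP}; here the right-hand side is read as $W K_\beta W^{T}$, so that the dimensions match $\Phi H(\rho)\Phi^{T}\in\mathbb R^{N\times N}$.

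First I would record two elementary identities. By~\eqref{eq:lemmaF}, $\Phi=\T_n(F)=P(I_n\kron F)$, and applying Lemma~\ref{lemma:defineP} to the vector $Fc\in\mathbb R^{N}$ gives $W=\T_n(Fc)=P(I_n\kron Fc)$. Since $I_n\kron Fc=(I_n\kron F)(I_n\kron c)$ by the bilinear property, this yields the factorization $W=\Phi(I_n\kron c)$. Second, writing the $n\times n$ matrix $K_\beta$ as $K_\beta\kron 1$ and applying the bilinear property twice,
\[
(I_n\kron c)\,K_\beta\,(I_n\kron c)^{T}=(K_\beta\kron c)(I_n\kron c^{T})=K_\beta\kron cc^{T}=H(\rho)\,.
\]

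With these in hand the lemma is immediate:
\[
\Phi H(\rho)\Phi^{T}=\Phi\,(I_n\kron c)\,K_\beta\,(I_n\kron c)^{T}\Phi^{T}=W K_\beta W^{T}\,.
\]

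I do not expect a genuine obstacle here beyond bookkeeping; the only point that needs a little care is matching Kronecker dimensions, in particular viewing the square matrix $K_\beta$ as $K_\beta\kron 1$ so that the bilinear identity applies when $K_\beta$ is sandwiched between $I_n\kron c$ and $(I_n\kron c)^{T}$. An equivalent route, if one prefers to avoid the factorization $W=\Phi(I_n\kron c)$, is to expand $\Phi H(\rho)\Phi^{T}=P(I_n\kron F)(K_\beta\kron cc^{T})(I_n\kron F)^{T}P^{T}$ directly and collapse it — again by two applications of the bilinear property — to $P\big(K_\beta\kron (Fc)(Fc)^{T}\big)P^{T}$, then do the same with $W K_\beta W^{T}=P(I_n\kron Fc)\,K_\beta\,(I_n\kron Fc)^{T}P^{T}$; the two expressions coincide.
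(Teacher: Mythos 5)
Your proof is correct and follows essentially the same route as the paper's: both rest on the factorization $\Phi=P(I_n\kron F)$ from Lemma~\ref{lemma:defineP} and the splitting $K_\beta\kron cc^T=(I_n\kron c)(K_\beta\kron 1)(I_n\kron c^T)$, merely grouped differently (you package the chain of equalities as $W=\Phi(I_n\kron c)$ first). You also correctly read the right-hand side as $WK_\beta W^T$, fixing the missing transpose in the statement.
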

\begin{proof}
Recalling the bilinear property of the Kronecker product and Lemma~\ref{lemma:defineP}, we see that
  \begin{align*}
    \Phi H(\rho) \Phi^T \!\!\!& = P\big[ I \kron F]H(\rho) \big[ I \kron
    F^T]P^T \nonumber\\
    &= P\big[ I \kron F \big][K_\beta \kron c c^T] \big[ I \kron
    F^T\big]P^T \nonumber\\
    &= P\big[ I \kron F\big][I \kron c ][K_\beta \kron 1] [I \kron c^T] \big[ I\! \kron\!  F^T\!\big]\!P^T \nonumber\\
    &= P\big[ I \kron Fc\big][K_\beta \kron 1]  \big[ I \kron
    c^T F^T\big]P^T \nonumber\\
    &= \T_n(Fc)K_\beta {\T_n(Fc)}^T=WK_\beta W^T \,,\nonumber
  \end{align*}
  which proves the result.
\end{proof}
\begin{theorem}\label{thm:marginals_equal}
Consider the output measurement models~\eqref{eq:bai_regression} and~\eqref{eq:hammerstein_g}. Then:
\begin{enumerate}
\item The marginal likelihoods of $y$ obtained from the two models are equivalent;
\item The parameter estimates obtained from~\eqref{eq:marg_lik} and~\eqref{eq:marg_lik2} are the same.
\end{enumerate}
\end{theorem}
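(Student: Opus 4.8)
The plan is to reduce the whole statement to Lemma~\ref{lemma:kernels}, which already equates the two ``signal'' covariance contributions. First I would write out the marginal distribution of $y$ in each model by integrating out the respective latent Gaussian vector. From~\eqref{eq:bai_regression} together with the prior~\eqref{eq:prior_theta}, marginalizing $\vartheta$ yields $y \sim \mathcal N\big(0,\,\Phi H(\rho)\Phi^T + \sigma^2 I\big) = \mathcal N(0,\Sigma_y)$, in agreement with~\eqref{eq:cov_y} (note that even though $H(\rho)$ is rank-deficient, $\Sigma_y$ is positive definite because $\sigma^2>0$, so the density is well defined). Similarly, from~\eqref{eq:hammerstein_g} with $g \sim \mathcal N(0,K_\beta)$, marginalizing $g$ gives $y \sim \mathcal N\big(0,\,W K_\beta W^T + \sigma^2 I\big) = \mathcal N(0,\Sigma_{y,2})$. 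Both marginals are zero-mean Gaussians, so they coincide as densities if and only if $\Sigma_y = \Sigma_{y,2}$.

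Next I would invoke Lemma~\ref{lemma:kernels}: with $W = \T_n(Fc)$ we have $\Phi H(\rho)\Phi^T = W K_\beta W^T$, hence $\Sigma_y = \Sigma_{y,2}$ identically once the parameter vectors are matched, namely $\rho = [\beta\;\;c^T]^T$, so that both models are parameterized by the same triple $(\beta,c,\sigma^2)$. This gives claim~1 directly: $p(y;\rho,\sigma^2)$ from the first model equals $p(y;c,\beta,\sigma^2)$ from the second for every admissible value of the parameters.

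For claim~2 I would simply observe that the two ML criteria~\eqref{eq:marg_lik} and~\eqref{eq:marg_lik2} are the maps $(\beta,c,\sigma^2)\mapsto \log\det\Sigma_y + y^T\Sigma_y^{-1}y$ and $(\beta,c,\sigma^2)\mapsto \log\det\Sigma_{y,2} + y^T\Sigma_{y,2}^{-1}y$, which by the previous step are the same function on the same domain (with $\beta\in[0,1)$ in both). Therefore the two optimization problems are identical and have the same set of minimizers; in particular, running the same solver from the same initialization returns $(\hat\rho,\hat\sigma^2)$ consistent with $(\hat c,\hat\beta,\hat\sigma^2)$.

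There is no real analytic obstacle here — the content is entirely in Lemma~\ref{lemma:kernels}. The only points that deserve a sentence of care are: (i) recording explicitly that the $y$-marginal of the first model is obtained by \emph{integrating out} $\vartheta$ (not by conditioning), so the relevant covariance really is $\Phi H(\rho)\Phi^T + \sigma^2 I$; and (ii) noting that the objective in~\eqref{eq:marg_lik2} depends on $c$ only through $W K_\beta W^T$, which is an even function of $c$ since $W=\T_n(Fc)$ is linear in $c$, so neither problem need have a unique minimizer — the precise statement is equality of the argmin sets (and, in practice, of the estimates returned by an identical procedure).
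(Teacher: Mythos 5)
Your proof is correct and follows essentially the same route as the paper: integrate out the latent vector in each model to get zero-mean Gaussian marginals with covariances $\Sigma_y$ and $\Sigma_{y,2}$, invoke Lemma~\ref{lemma:kernels} to equate them, and conclude that the two ML objectives are the same function of $(\beta,c,\sigma^2)$. Your added remarks --- that $\Sigma_y$ is positive definite despite $H(\rho)$ being rank-deficient, and that equality of the criteria gives equality of argmin \emph{sets} (the objective being even in $c$) --- are sensible refinements the paper glosses over, but they do not change the argument.
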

\begin{proof}
Let
\begin{align}
p_1(y;\rho,\,\sigma^2) & = \int p\left(y,\,\vartheta;\,\rho,\,\sigma^2 \right) d\vartheta  \\
p_2(y;c,\,\beta,\,\sigma^2) & = \int p\left( y ,\,g;\,c,\,\beta,\,\sigma^2 \right) dg
\end{align}
be the marginal likelihoods of the two models. We find that
  \begin{align*}\label{eq:proof_loglik}
    p_1(y;\rho,\,\sigma^2) & = \N(0,\Sigma_{y}),\\
    p_2(y;c,\,\beta,\,\sigma^2) & = \N(0,\Sigma_{y,2})\,.
  \end{align*}
Using Lemma~\ref{lemma:kernels}, we have that $\Sigma_{y}=\Sigma_{y,2}$, hence
$p_1$ and $p_2$ are equivalent. The same promptly holds for their ML maximizers
$\hat \rho = {[\hat\beta \,\,\hat c^T]}^T$ and $\hat \sigma^2$.
\end{proof}
We are now in the position to prove that the estimate $\hat \vartheta$ is a KOP
vector
\begin{theorem}\label{th:hop_vector_estimate}
Assume that $\rho$ and $\sigma^2$ are estimated using the ML
approach~\eqref{eq:marg_lik} (or, equivalently,~\eqref{eq:marg_lik2}). Then, the minimum
variance estimate of $\vartheta$ in~\eqref{eq:theta_estim} is such that
  \begin{equation} \label{eq:estim_hop}
   \hat \vartheta = \hat g \kron \hat c \,,
  \end{equation}
where $\hat g$ is the minimum variance estimate of $g$ in~\eqref{eq:g_method_sysid} and $\hat c$ is the ML estimate of $c$.
\end{theorem}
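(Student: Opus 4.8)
The plan is to start from the closed form~\eqref{eq:theta_estim} for $\hat\vartheta$ and rewrite $H(\rho)\Phi^T$ until the Kronecker structure becomes manifest, using only the factorization $\Phi=\T_n(F)=P(I_n\kron F)$ from~\eqref{eq:lemmaF}, the definition~\eqref{eq:hop_kernel} of the KOP kernel, the bilinear property of the Kronecker product, Lemma~\ref{lemma:defineP}, the identity $\Sigma_y=\Sigma_{y,2}$ supplied by Lemma~\ref{lemma:kernels}, and the equivalence of the ML estimates from Theorem~\ref{thm:marginals_equal}. The target is to exhibit $\hat\vartheta$ as $(I_n\kron c)$ applied to precisely the vector $\hat g$ of~\eqref{eq:g_method_sysid}.

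First I would compute, with $w=Fc$ and $W=\T_n(w)=\T_n(Fc)$,
\[
H(\rho)\Phi^T=(K_\beta\kron cc^T)(I_n\kron F^T)P^T=(K_\beta\kron c(Fc)^T)P^T=(I_n\kron c)(K_\beta\kron w^T)P^T,
\]
where each step is an application of $(A\kron B)(C\kron D)=AC\kron BD$. The key reduction is then $(K_\beta\kron w^T)P^T=K_\beta(I_n\kron w^T)P^T=K_\beta\big(P(I_n\kron w)\big)^T=K_\beta\,\T_n(w)^T=K_\beta W^T$, where the third equality is the transpose of Lemma~\ref{lemma:defineP}. Hence $H(\rho)\Phi^T=(I_n\kron c)K_\beta W^T$. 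Substituting into~\eqref{eq:theta_estim} and invoking Lemma~\ref{lemma:kernels} to replace $\Sigma_y$ by $\Sigma_{y,2}$ gives $\hat\vartheta=(I_n\kron c)\,K_\beta W^T\Sigma_{y,2}^{-1}y=(I_n\kron c)\hat g$. Writing $\hat g=\hat g\kron 1$ and using bilinearity one last time, $(I_n\kron c)(\hat g\kron 1)=\hat g\kron c$. Finally, by Theorem~\ref{thm:marginals_equal} the vector $c$ entering the KOP hyperparameter estimate~\eqref{eq:marg_lik} coincides with the ML estimate $\hat c$ of~\eqref{eq:marg_lik2} (and likewise for $\beta$ and $\sigma^2$), so the quantities being substituted are mutually consistent and $\hat\vartheta=\hat g\kron\hat c$; in particular $\hat\vartheta$ is a KOP vector and can be decomposed exactly via Proposition~\ref{prop:decompose}.

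The only genuinely delicate part is the chain of Kronecker manipulations: everything else is substitution and bookkeeping. The main obstacle is keeping track of which identity factor ($I_n$, $K_\beta$, or a scalar $1$) occupies each slot so that the bilinear rule applies with conformable blocks, especially in the collapse $(K_\beta\kron w^T)P^T=K_\beta W^T$, which is where Lemma~\ref{lemma:defineP} is really used. Once that identity is in place, the factorization of $\hat\vartheta$ into $\hat g\kron\hat c$ is immediate, and no approximation (unlike the rank-one truncation~\eqref{eq:rank1} in the method of~\cite{bai1998optimal}) is needed.
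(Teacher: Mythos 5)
Your proposal is correct and follows essentially the same route as the paper's proof: both expand $H(\hat\rho)\Phi^T$ via the bilinearity of the Kronecker product, collapse the $(I\kron \hat c^T F^T)P^T$ (equivalently $(K_\beta\kron w^T)P^T$) factor into $W^T$ using Lemma~\ref{lemma:defineP}, identify $\Sigma_y=\Sigma_{y,2}$ through Lemma~\ref{lemma:kernels}/Theorem~\ref{thm:marginals_equal}, and recognize $K_{\hat\beta}W^T\Sigma_y^{-1}y=\hat g$. The only cosmetic difference is that you pull the factor $(I_n\kron \hat c)$ out on the left before collapsing to $W^T$, whereas the paper keeps $[K_{\hat\beta}\kron\hat c]$ intact and appends $\kron\,1$ at the end; the algebra is identical.
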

\begin{proof}
Using~\eqref{eq:theta_estim} and recalling the bilinear property of the
Kronecker product and Lemma~\ref{lemma:defineP}, we have
\begin{align}\label{eq:proof_thm1}
    \hat \vartheta & = \Sigma_{\vartheta y} \Sigma_{y}^{-1} y = H(\hat \rho)
    \Phi^T\Sigma_{y}^{-1}y \nonumber \\
    &= [K_{\hat \beta} \kron \hat c\hat c^T][I\kron F^T]P^T\Sigma_{y}^{-1}y \nonumber\\
    &= [K_{\hat \beta} \kron \hat c][I\kron \hat c^T F^T]P^T\Sigma_{y}^{-1}y \\
    &= [K_{\hat \beta} \kron \hat c]W^T\Sigma_{y}^{-1}y \nonumber\\
    &= [K_{\hat \beta} \kron \hat c][W^T\Sigma_{y}^{-1}y\kron 1] \nonumber \\
    &= [K_{\hat \beta} W^T\Sigma_{y}^{-1}y\kron \hat c] \nonumber\,.
  \end{align}
From Theorem~\ref{thm:marginals_equal} we know that $\Sigma_{y} = \Sigma_{y,2}$;
thus, recalling~\eqref{eq:g_method_sysid} we have
\begin{equation}
K_{\hat \beta} W^T\Sigma_{y}^{-1}y = \hat g \,,
\end{equation}
so that~\eqref{eq:estim_hop} is obtained.
\end{proof}
\begin{corollary}\label{cor:rank1}
The estimate $\hat \vartheta$ given in~\eqref{eq:theta_estim} is a KOP vector and $\mathcal R_{n,p} (\hat \vartheta)$ is rank-one.
\end{corollary}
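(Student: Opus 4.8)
The plan is to obtain the corollary as an immediate consequence of Theorem~\ref{th:hop_vector_estimate} combined with Lemma~\ref{th:hop_vec}. First I would recall from Theorem~\ref{th:hop_vector_estimate} that, when $\rho$ and $\sigma^2$ are estimated by ML, the estimate~\eqref{eq:theta_estim} admits the factorization $\hat\vartheta = \hat g \kron \hat c$, with $\hat g$ given by~\eqref{eq:g_method_sysid} and $\hat c$ the ML estimate of $c$. Since $\hat g \in \mathbb{R}^n$ and $\hat c \in \mathbb{R}^p$, this is exactly the form~\eqref{eq:theta} required in the definition of a KOP vector, so $\hat\vartheta$ is a KOP vector.

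Next I would invoke Lemma~\ref{th:hop_vec} applied to the KOP vector $\hat\vartheta$: it yields $\mathcal R_{n,p}(\hat\vartheta) = \hat c\,\hat g^T$, an outer product of two vectors, and therefore a matrix of rank at most one. It has rank exactly one whenever $\hat g$ and $\hat c$ are both nonzero; the degenerate case is either excluded or absorbed into the ``rank-one'' terminology used throughout the paper. This closes the argument.

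I do not anticipate any real obstacle, since the statement is essentially bookkeeping on top of results already established. The one point worth flagging is that the factorization in Theorem~\ref{th:hop_vector_estimate} actually holds for \emph{any} admissible hyperparameter vector $\rho = {[\beta\,\,c^T]}^T$: inspecting~\eqref{eq:proof_thm1} one sees $\hat\vartheta = \big(K_{\beta} W^T \Sigma_{y}^{-1} y\big) \kron c$ \emph{before} the ML step is used (the ML assumption merely serves to identify the first Kronecker factor with $\hat g$ from~\eqref{eq:g_method_sysid}). Hence the KOP / rank-one property of $\hat\vartheta$ is insensitive to the particular choice of $\rho$ and $\sigma^2$, and one could phrase the proof of the corollary by referring directly to~\eqref{eq:proof_thm1} rather than to the full statement of Theorem~\ref{th:hop_vector_estimate}.
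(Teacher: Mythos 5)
Your proof is correct and follows exactly the paper's own argument: invoke Theorem~\ref{th:hop_vector_estimate} for the factorization $\hat\vartheta = \hat g \kron \hat c$, then Lemma~\ref{th:hop_vec} for the rank-one property of the reshape. Your additional observation that the Kronecker structure in~\eqref{eq:proof_thm1} already holds for any admissible $\rho$, independently of the ML step, is accurate and slightly sharper than what the paper states, but it does not change the substance of the argument.
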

\begin{proof}
Since from~\eqref{eq:estim_hop} we have $\hat \vartheta = \hat g \kron \hat c$,
$\hat \vartheta$ is a KOP vector. The second part of the statement follows
directly from Lemma~\ref{th:hop_vec}.
\end{proof}
We can make an interesting observation, that further links the KOP estimate to
our previous kernel based estimator:
\begin{corollary}
  The estimates of the nonlinearity coefficients $\hat c$, found
  maximizing~\eqref{eq:marg_lik} and those resulting from decomposing
  $\hat \vartheta$ as in~\eqref{eq:estim_hop} are the same.
\end{corollary}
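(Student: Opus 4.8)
The plan is to obtain the corollary directly from Theorem~\ref{th:hop_vector_estimate}, the only genuine point being to make precise what ``decomposing $\hat\vartheta$ as in~\eqref{eq:estim_hop}'' means. A KOP vector $\vartheta = g\otimes c$ determines the pair $(g,c)$ only up to the rescaling $g\otimes c = (\alpha g)\otimes(\alpha^{-1}c)$, $\alpha\neq 0$, so a decomposition of $\hat\vartheta$ is well defined only once the factor playing the role of the impulse response is fixed. In~\eqref{eq:estim_hop} that factor is fixed to be the minimum variance estimate $\hat g = K_{\hat\beta}W^T\Sigma_y^{-1}y$ of~\eqref{eq:g_method_sysid}, whose norm and sign are completely determined by $\hat\beta$, $\hat c$, $\hat\sigma^2$ and the data. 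First, then, I would note that with $\hat g$ fixed in this way the complementary factor $\hat c$ in $\hat\vartheta = \hat g\otimes\hat c$ is unique (it can be read off, for instance, as the first $p$ entries of $\hat\vartheta$ divided by $\hat g_1$, whenever $\hat g_1\neq 0$).

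Next I would invoke the chain of equalities established in the proof of Theorem~\ref{th:hop_vector_estimate}. There, $\hat\vartheta = H(\hat\rho)\Phi^T\Sigma_y^{-1}y$ is rewritten, using the bilinear property of the Kronecker product and Lemma~\ref{lemma:defineP}, as $\bigl(K_{\hat\beta}W^T\Sigma_y^{-1}y\bigr)\otimes\hat c$, in which the second factor is literally the block $\hat c$ of the ML hyperparameter vector $\hat\rho = [\hat\beta\ \hat c^T]^T$ returned by~\eqref{eq:marg_lik} (equivalently, by Theorem~\ref{thm:marginals_equal}, by~\eqref{eq:marg_lik2}). Combined with the uniqueness of the complementary factor noted above, this shows that the $\hat c$ obtained by decomposing $\hat\vartheta$ according to~\eqref{eq:estim_hop} coincides with the $\hat c$ produced by maximizing~\eqref{eq:marg_lik}, which is the assertion.

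I do not expect a genuine obstacle here; the one place that requires care --- and where the statement could be misread --- is the scaling ambiguity just discussed. Had we instead decomposed $\hat\vartheta$ through the normalized recipe of Proposition~\ref{prop:decompose} (forcing $\|g\|=1$ and $g_1>0$), the resulting nonlinearity vector would be the ML $\hat c$ rescaled by $\|\hat g\|$ and possibly sign-flipped, hence not equal to it in general; so the argument must be explicit that~\eqref{eq:estim_hop} refers to the un-normalized factorization with $\hat g$ taken as in~\eqref{eq:g_method_sysid}.
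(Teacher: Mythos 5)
Your argument is correct and follows essentially the same route as the paper, which likewise derives the corollary directly from the chain of equalities in the proof of Theorem~\ref{th:hop_vector_estimate} (where the second Kronecker factor of $\hat\vartheta$ is literally the ML block $\hat c$ of $\hat\rho$) together with Theorem~\ref{thm:marginals_equal}. Your explicit treatment of the scaling ambiguity in the factorization $g\otimes c=(\alpha g)\otimes(\alpha^{-1}c)$ --- and the remark that the normalized decomposition of Proposition~\ref{prop:decompose} would rescale $\hat c$ --- is a point the paper leaves implicit, and it is a worthwhile clarification rather than a deviation.
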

\begin{proof}
  Follows directly from~\eqref{eq:proof_thm1} and Theorem~\ref{thm:marginals_equal}.
\end{proof}

We have established that the estimate $\hat \vartheta$ produced using the
procedure detailed in Section~\ref{sec:new_approach} is a KOP vector. So, the
estimates of the impulse response $\hat g$ and the nonlinearity coefficients
$\hat c$ can be retrieved using~\eqref{eq:reshape}. The whole procedure is
summarized in  Algorithm 1.

\begin{figure*}[t!]\label{fig:boxplot}
  \begin{center}
    \includegraphics[width=.9\textwidth]{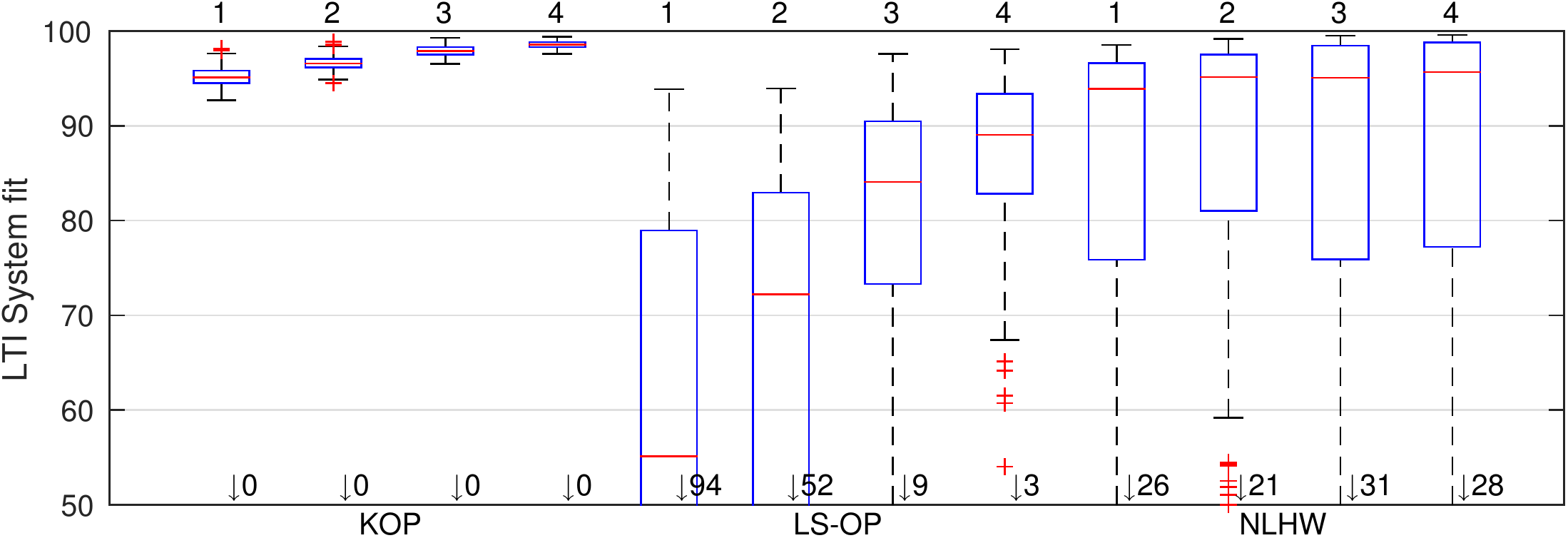}\\
    \includegraphics[width=.9\textwidth]{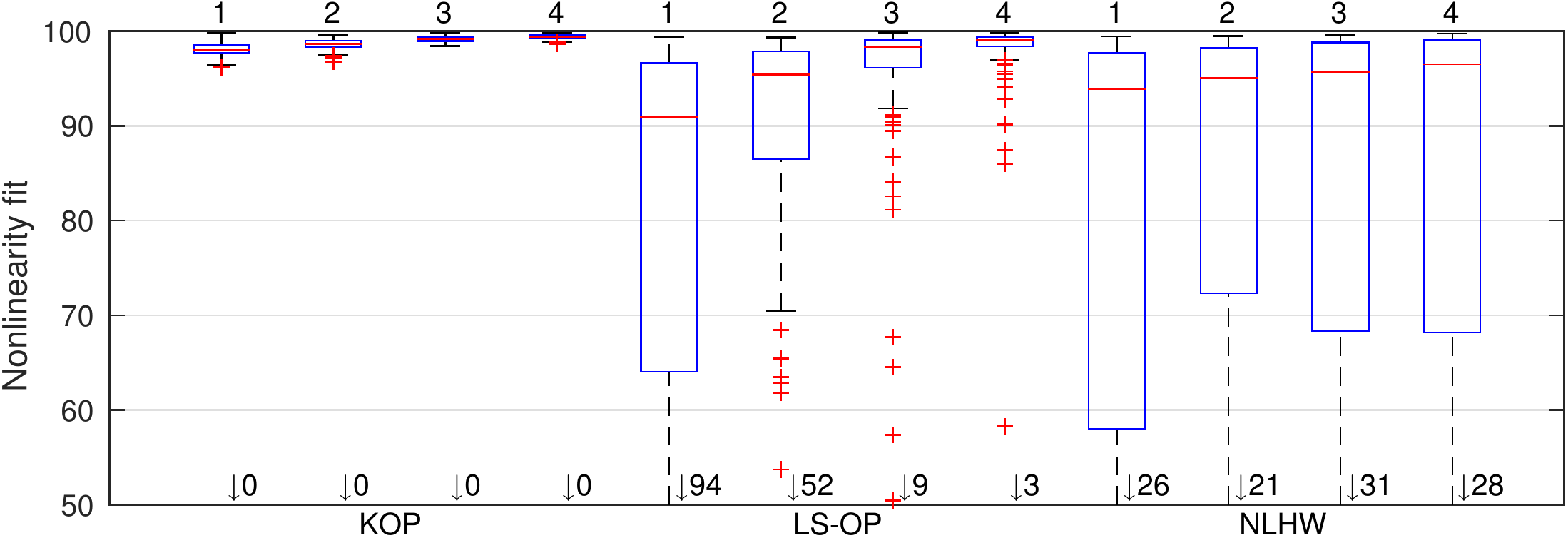}
  \caption{Results of the Monte Carlo experiments for different SNR\@. Top: Fit in
  percent of the linear system impulse response. Bottom: Fit in percent of the
nonlinear transformation.}
    \end{center}
\end{figure*}

\begin{algorithm}[ht!]\label{alg}
\textbf{Algorithm 1}: KOP kernel-based Hammerstein system identification \vspace{0.1cm}\\
Input: ${\{y_t\}}_{t=1}^{N}$, ${\{u_t\}}_{t=0}^{N-1}$ \\
  Output: ${\{\hat{g}_t\}}_{t=1}^n$, ${\{\hat{c}_i\}}_{i=1}^p$
  \begin{enumerate}
  \item Obtain $\hat \rho$, $\hat \sigma^2$ solving~\eqref{eq:marg_lik}
  \item Estimate $\hat \vartheta$ using~\eqref{eq:theta_estim}
  \item Find $\hat g$, normalizing $\hat \vartheta$ by $\hat c$ (Proposition~\ref{prop:decompose}).
  \end{enumerate}
\end{algorithm}
The result of the outlined regularization procedure
applied to the overparameterized vector, with a suitable rank deficient kernel,
yields estimates that are equivalent to the ones provided by the procedure
outlined in~\cite{risuleo2015kernel}.
\section{Numerical Experiments}\label{sec:experiments}
We evaluate the proposed algorithm with numerical simulations of Hammerstein
systems. We set up 4 experiments in order to test different experimental
conditions. The experiments consist of 200 independent Monte Carlo runs each.
At every Monte Carlo run, we generate random data and Hammerstein systems,
according to the following specifics.
\begin{itemize}
\item The linear subsystem model is of output error type:
  \begin{equation}
    y(t) = \frac{B(q)}{A(q)}f(u(t)) + e(t),
  \end{equation}
 generated by picking $4$ poles and $4$ zeros at random. The
  poles and zeros were sampled in conjugate pairs $(a e^{j\omega},
  ae^{-j\omega})$ with $a$ uniform in $[0.5,0.95]$ and $\omega$ uniform in
    $[0,\pi]$.
\item The input nonlinearity is a polynomial of fourth order. It is a linear
  combination of Legendre polynomial basis functions, defined as
\begin{equation}
  P_i(u)= \frac{1}{2^i i!}\frac{\partial^i}{\partial u^i}\left[ {(u^2-1)}^i \right] \,,
\end{equation}
where $i = 0,\,\ldots,\,4$. The coefficients $c$ are chosen uniformly in
$[-1,1]$.
\item The input to the system is Gaussian white noise with unit variance.
\item The experimental data consists in $N=1000$ pairs of input-output samples,
 simulated from zero initial conditions.
\item   The measurement noise $e(t)$ is Gaussian and white. Its variance is a fraction of the noiseless
  output variance, i.e.
\begin{equation}
  \sigma^2 = \frac{\Var{Wg}}{SNR} \,,
\end{equation}
where $SNR$ depends on the experiment.
\end{itemize}
Every experiment is carried out in a different signal to noise ratio (SNR)
condition, see Table I.
\begin{table}[ht]\label{tab:fits}
\begin{center}
\begin{tabular}{ccccc}
  \toprule
Experiment $\#$ & 1 & 2 & 3 & 4\\
\midrule
SNR & 10 & 20 & 50 & 100\\
\bottomrule
\end{tabular}
\caption{SNR considered in the experiments}
\end{center}
\end{table}

We aim at estimating $n=30$ samples of the system impulse
response of the LTI systems (which are such that $\|g\|_2=1$ and with the sign of
the first sample positive) and the $p=5$ coefficients of the nonlinear block.
We test the following estimation methods.
\begin{itemize}
  \item \emph{KOP}: This is the method described in this paper. The ML
    optimization problem is solved using the function \texttt{fminsearch}
    available in \matlab.
    The search was initialized with the elements of $c$ uniformly sampled in
    $[0,1]$, $\beta_0=0.5$, and $\sigma$ equal to the sample variance of the
    residuals of $\hat\vartheta^{\textrm{LS}}$.
  \item \emph{LS-OP}: This estimator implements the least-squares
    overparameterization-based method proposed in~\cite{bai1998optimal} and
    briefly reviewed in Section~\ref{sec:er_wei_method}. Note that, under the
    working experimental conditions, this method has to perform a least-squares
    estimate of a $150$-dimensional vector.
  \item \emph{NLHW}: This is the \matlab\ function \texttt{nlhw} that uses the
    prediction error method to identify the linear block in the system
    (see~\cite{ljung2009developments} for details). To get the best performance from
    this
    method, we  equip it with an oracle that knows the true order of the LTI
    system generating the measurements and knows the order of the polynomial input nonlinearity.
\end{itemize}
Note that all the methods
  have available the same amount of prior information, namely the orders of the
  input polynomial. The knowledge of th order of the linear block is known only to NLHW, which makes use of a parametric description of the linear system. Furthermore, we
  note that, due to the Gaussianity of the noise, the least-squares procedure
  in LS-OP is optimal in the Gauss-Markov sense.

We assess the accuracy of the estimated models using two performance indices.
The first is the fit of the system impulse response, defined as
\begin{equation}
  FIT_{g,i} \triangleq 100\left(  1 - \frac{\norm{g_i - \hat g_i}_2}{\norm{g_i - \bar g_i}_2}\right)\,,
\end{equation}
where $g_i$ is the system generated at the $i$-th run of each experiment,
$\hat g_i$ its estimate and $\bar g_i$ its mean. The second is the fit of the static
nonlinear function, given by
\begin{equation}
  FIT_{f,i} \triangleq 100\left( 1 - \frac{\Vert f_i(u) - \hat f_i(u)\Vert_2}{\Vert f_i(u) -
  \overline{f_i(u)}\Vert_2} \right) \,.
\end{equation}
Figure 1 shows the results of the outcomes of the 4 experiments. The box plots
compare the results of KOP, LS-OP and NLHW for the considered signal to
noise ratios. We can see that, for high SNR, all the estimators perform well,
especially in identifying the nonlinearity coefficients. For lower SNR
, however, the proposed method KOP performs substantially better than the
others. This is mainly because of the regularizing effect of the KOP kernel that
reduces the variance of the estimates. Notice also that the proposed
approach enforces the rank deficiency in the reshaped version of $\hat \vartheta$, so
it circumvents the errors introduced by the rank-one approximation made by
LS-OP\@.
The main drawback of NLHW is that it relies on a high dimensional nonlinear
optimization, as it needs to estimate all the parameters in the model. The
proposed method is instead nonparametric, and does not rely on the knowledge of
the order of the LTI system.

\section{Conclusions}
Regularization is an effective technique to control the variance of least
squares estimates. In this paper we have studied how to improve popular overparameterization methods for
Hammerstein system identification using Gaussian process regression with a
suitable prior. To this end, starting from the stable spline kernel, we
have introduced the KOP kernel, which we believe to be a novel concept in
Hammerstein system identification. Using the KOP kernel, we have designed a
regularized least-squares estimator which provides an estimate of the
overparameterized vector. The impulse response of the LTI system and the
coefficients of the static nonlinearity are then retrieved by suitably
decomposing the estimated vector. In contrast with the original
overparameterization method, this decomposition involves no approximation.
An important contribution is showing that this procedure
estimate is equivalent to our recently proposed kernel-based
method~\cite{risuleo2015kernel}. Using
simulations, we have shown that the proposed method compares very favorably
with the current state-of-the-art algorithms for Hammerstein system
identification.

The introduction of the KOP kernel possibly opens up for new effective system identification methods based on the combination of overparameterized vectors and regularization techniques. In fact, we believe that Hammerstein system identification is not the only problem where KOP kernels could find application.
Another possible extension of the proposed method is the design of new kernels
merging a kernel for the static nonlinearity and the stable spline kernel.
The main issue with this approach is that, at least theoretically, the Gaussian
description of the resulting overparameterization vector would be lost.

\printbibliography\
 \end{document}